\newcommand{\Prob}[1]{\ensuremath{\mathbb{P}\left(#1\right)}}
\newcommand{\Exp}[1]{\ensuremath{\mathbb{E}\left(#1\right)}}
\newcommand{\R}{\ensuremath{\mathbb{R}}}
\newcommand{\Oh}[1]{\ensuremath{\textrm{O}\left(#1\right)}}
\newcommand{\oh}[1]{\ensuremath{\textrm{o}\left(#1\right)}}
\newcommand{\mydiff}[1]{\ensuremath{\,{\rm d}#1}}
\newcommand{\abs}[1]{\left\lvert#1\right\rvert}
\newcommand{\fancy}[1]{\ensuremath{\mathcal{#1}}}
\newcommand{\F}{\mathbb{F}}
\newcommand{\curlybr}[1]{\ensuremath{\left\{#1\right\}}}
\renewcommand{\P}{\ensuremath{\mathbb{P}}}
\theoremstyle{plain}
 \newtheorem{Theorem}{Theorem}[section]
 \newtheorem{Proposition}{Proposition}[section]
 \newtheorem{Lemma}{Lemma}[section]
\theoremstyle{definition}
 \newtheorem{Definition}{Definition}[section]
 \newtheorem{Notation}{Notation}[section]
 \newtheorem{Remark}{Remark}[section]
\title{Implied volatility explosions: European calls and implied volatilities close to expiry in exponential L{\'e}vy models}
\author{Michael Roper\thanks{School of Mathematics and Statistics, University of New South Wales;  email:s3115607@science.unsw.edu.au}}
\begin{document}
\maketitle

\begin{abstract}
We examine the small expiry behaviour of the price of call options in models of exponential L{\' e}vy type. In most cases of interest, it turns out that
\begin{equation*}
	\Exp{(S_\tau-K)^+} - (S_0-K)^+ \sim 	\begin{cases}
							\tau\int_\R(S_0e^x-K)^+\,\nu(\textrm{d}x),&S_0<K,\\
							\tau\int_\R(K-S_0e^x)^+\,\nu(\textrm{d}x),&S_0>K,\\
			                        \end{cases} 
\end{equation*}
as $\tau\to0^+$, i.e. as time to expiry goes to zero. (We have written $\nu$ for the L{\'e}vy measure of the driving L{\'e}vy noise.) In ``complete generality'', however, we can say only that
$$
	\Exp{(S_\tau-K)^+}-(S_0-K)^+=\Oh{\tau}
$$
as $\tau\to0^+$.

Using our results on the behaviour of call options close to expiry we show that implied volatility explodes as $\tau\to0^+$ in ``most'' exponential L{\' e}vy models.

Attention is restricted to calls and implied volatilities that are not at-the-money, i.e. $S_0\neq K$. 
\end{abstract}

\section{Introduction}
We model the stock as the exponential of a L{\' e}vy process, \emph{viz.}
$$
	S_\tau=S_0e^{X_\tau},\quad \forall\tau\geq0,
$$
where $X$ is a L{\' e}vy process and $S_0>0$ is some constant. For simplicity, we assume zero interest rate and dividend yield. The stock is a martingale.
The model is presented under the pricing measure, $\P$, chosen by the market so that we price options as expectations under $\P$ of
their payoff. Since $S$ is a time-homogeneous Markov process there is no loss of generality in assuming that we are at time zero. 

We show that in \emph{all} such models\footnote{We use the Landau notation. Let $a\in[-\infty,\infty]$. Then $f(x)\sim g(x)$ means $f(x)/g(x)\to1$ as $x\to a$, $f(x)=\Oh{g(x)}$ as $x\to a$ means that there is a neighbourhood of $a$ and a constant $\alpha>0$ such that $\abs{f(x)}\leq \alpha \abs{g(x)}$ in this neighbourhood. Finally, $f(x)=\oh{g(x)}$ when $f(x)/g(x)\to0$ as $x\to a$. Obvious adjustments are made for one-side limits.}
\begin{equation}\label{eq:OhCallasymptotics}
	\Exp{(S_\tau - K)^+} - (S_0-K)^+ = \Oh{\tau},\quad \textrm{ as }\tau\to0^+.
\end{equation}

In most cases of interest, we are able to identify the precise asymptotics as 
\begin{equation}\label{eq:ExactCallasymptotics}
	\Exp{(S_\tau-K)^+} - (S_0-K)^+ \sim 	\begin{cases}
							\tau\int_\R(S_0e^x-K)^+\,\nu(\textrm{d}x),&S_0<K,\\
							\tau\int_\R(K-S_0e^x)^+\,\nu(\textrm{d}x),&S_0>K,\\
			                        \end{cases} 
\end{equation}
where $\nu$ is the L{\' e}vy measure of $X$. 
Given that \eqref{eq:ExactCallasymptotics} holds, we show that the implied volatility of the $K$ strike call goes to $+\infty$ as maturity goes to zero. We also
show that implied volatility is always finite (strictly) prior to expiry. Some other cases are also considered.

We now review previous work on small expiry asymptotics for implied volatility and call option prices.
\subsection{Call option asymptotics}
In \cite{LevendorskiiPub}, Levendorski{\v i} calculates small expiry asymptotics for the European put and call in exponential L{\' e}vy models.
The results require some regularity conditions on the driving L{\' e}vy process that we are able to do without. He has also presented results in \cite{LevendorskiiLevy}, 
where he conjectured that Equation \eqref{eq:ExactCallasymptotics} holds in more generality.

In \cite{CarrWu}, Carr and Wu argue that\footnote{We will use $\widehat{S}$ for the stock price process when it is not assumed that it is an exponential of a L{\'e}vy process.} 
$$
	\Exp{(\widehat{S}_{t+\tau}-K)^+|\fancy{F}_t}-(\widehat{S}_t-K)^+= \Oh{\tau},\quad \textrm{ as }\tau\to0^+
$$
in the case that $\widehat{S}_t\neq K$. They deal with a more general class of processes that includes stochastic volatility. Their argument is very different 
to the one presented here. Not all exponential L{\' e}vy models are included in their setup: they require the L{\' e}vy measure to have a density.
We do not require this. It does not seem possible to obtain exact asymptotics using their method.

\subsection{Small time asymptotics of expectations in L{\'e}vy models}
In \cite{FigLopez}, Figueroa-L{\'o}pez obtains that
\begin{equation}\label{eq:FigLopezLimit}
	\lim_{\tau\to0^+}\dfrac{1}{\tau}\,\Exp{f(X_\tau)}=\int_\R f(x)\,\nu({\rm d}x),
\end{equation}
where $X$ is a L{\'e}vy process with L{\'e}vy measure $\nu$ and $f$ satisfies certain constraints on its behaviour as $\abs{x}\to0$ and $\abs{x}\to\infty$.
See also Jacod (\cite{Jacod}), Woerner (\cite{Woerner}), and  R{\"u}schendorf and Woerner (\cite{RuschendorfWoerner}). There is also a 
well-known result stated as Corollary 8.9 in Sato (\cite{Sato}) that \eqref{eq:FigLopezLimit} holds whenever $f$ is continuous and bounded and
vanishes in a neighbourhood of the origin. 

\subsection{Implied volatility asymptotics}
The best known works on implied volatilities near expiry are by Berestycki \emph{et al.} (\cite{BBF1} and \cite{BBF2}). The analysis is for 
diffusion models of the stock and all strikes. The class of models considered does not allow for implied volatility explosions. The behaviour of at-the-money implied 
volatility in stochastic volatility models with bounded volatility and a finite variation jump component has been considered by Durrleman (\cite{DurrlemanATM}). No 
implied volatility explosion occurs in the considered model and, in fact, the limiting value of the implied volatility turns out to be the
``instantaneous spot volatility'' of the model. We have shown, see \cite{GoldysRoper}, that the boundedness assumption on the volatility can be considerably relaxed.
In \cite{CarrWu}, Carr and Wu argue that implied volatility explodes as time to expiry goes to zero once jumps are included in their stochastic volatility with
jumps model. To the best of our knowledge there has been no rigorous study of the behaviour of implied volatility as expiry goes to zero in models including jumps.

\section{Setup and definitions}
For simplicity, we assume that interest rates and dividends are identically zero. 

We let the stock price $S$ be a non-negative martingale on a filtered probability space $(\Omega,\fancy{F},\F,\P)$ where $\F=(\fancy{F}_t)_{t\geq0}$ is a filtration satisfying the usual conditions. We assume that the process $S$ is such that
\begin{equation}\label{eq:SeqXexp}
	S_\tau=S_0e^{X_\tau},\quad \tau\geq0,
\end{equation}
where $S_0$ is a strictly positive constant and $X$ is a L{\'e}vy process.

We take $\P$ to be the pricing measure and so the price of a European call expiring at 
time $\tau\geq 0$ with strike $K(>0)$ is given by
$$
	\Exp{(S_{\tau}-K)^+}.
$$

We now recall the 
\begin{Definition}[L{\'e}vy process]
 A real-valued, $\F$-adapted process $X$ is a L{\'e}vy process provided that
\begin{enumerate}
 	\item $X_0=0$, $\P$-a.s.;
	\item $X$ is c{\`a}dl{\`a}g; 
	\item $X$ has stationary and independent increments; and
	\item $X$ is continuous in probability.
\end{enumerate}
\end{Definition}
We recall that a L{\' e}vy process is described by its characteristics triplet $(\gamma,\sigma,\nu)$ where
$\sigma,\gamma\in\R$, $\sigma\geq0$, and $\nu$ is a non-negative Radon measure satisfying
\begin{equation*}
\nu(\curlybr{0})=0\quad\textrm{ and }\int_\R (1\wedge y^2) \nu(\mydiff{y})<\infty.
\end{equation*}

So that our stock price process is a martingale we also assume that
\begin{equation}\label{eq:martconstraints}
\int_{\abs{y}>1}e^y\,\nu({\rm d}y)<\infty\textrm{ and }\gamma=-\dfrac{\sigma^2}{2}-\int_{\R}(e^y-1-y\mathbbm{1}_{\abs{y}\leq 1})\,\nu({\rm d}y),
\end{equation}
see Cont and Tankov (\cite{ContTankov}).
Let us choose now and fix for the remainder of this note $(\gamma,\sigma,\nu)$ and $S_0>0$.

\begin{Definition}
 A L{\'e}vy process $X$ is called trivial iff it has characteristic triplet $(\gamma,0,0)$. In our setup $S$ is a martingale so that $\gamma$ must be zero.
See Sato (p. 71 of \cite{Sato}).
\end{Definition}

To make the application of Figueroa-L{\'o}pez's result plain, we recall the definitions of submultiplicative and subadditive functions. 

\begin{Definition}
A non-negative locally bounded function $k:\R\to\R$ is submultiplicative if there exists a constant $\alpha>0$ such that
$k(x+y)\leq \alpha k(x)k(y)$ for all $x,y\in\R$.  
\end{Definition}

\begin{Definition}
A non-negative locally bounded function $q:\R\to\R$ is subadditive provided that there exists a constant $\beta>0$ such $q(x+y)\leq \beta (q(x)+q(y))$ for all $x,y\in\R$.  
\end{Definition}

We work with the following class of ``dominating functions''.
\begin{Definition}[The class $\mathcal{S}(\nu)$]
 A function $u:\R\to\R$ is of class $\mathcal{S}(\nu)$ if
\begin{enumerate}
 \item $u(x)=q(x)k(x)$ for some functions $q$ and $k$ where $q$ is subadditive and $k$ is submultiplicative; and
 \item $\int_{\abs{x}>1}\abs{u(x)}\nu(\textrm{d}x)<\infty$.
\end{enumerate}
\end{Definition}

We need the definition of implied volatility. 
\begin{Definition}[Implied volatility]
Let $\Sigma_t(K,\tau)$ denote the time $t$ implied volatility of a call option with strike $K>0$ expiring in $\tau$ time units. The stock $\widehat{S}$ is a 
positive martingale on an appropriate filtered probability space. Then $\Sigma_t(K,\tau)$ is the unique $\theta\in[0,\infty]$ that solves
$$
\widehat{S}_t\Phi\left(\dfrac{\ln(\widehat{S}_t/K)}{\theta\sqrt{\tau}}-\dfrac{\theta\sqrt{\tau}}{2}\right) - K\Phi\left(\dfrac{\ln(\widehat{S}_t/K)}{\theta\sqrt{\tau}}+\dfrac{\theta\sqrt{\tau}}{2}\right)=\Exp{(\widehat{S}_{t+\tau}-K)^+|\fancy{F}_t}.
$$
\end{Definition}

We will use the following ``put and call functions''.
\begin{Notation}\label{not:cp}
 For each fixed $K>0$, let $c:\R\to\R$ (respectively $p:\R\to\R$) be defined by $c(x)=(S_0e^x-K)^+$ (respectively $p(x)=(K-S_0e^x)^+$).
\end{Notation}

\section{Pertinent results from the literature}
\subsection{Results on L{\' e}vy processes}
The call option asymptotics we will obtain are a straightforward application of the following, rather recent, result.
\begin{Theorem}[Figueroa-L{\'o}pez, Theorem 1.1 in \cite{FigLopez}]\label{thm:FigLopez}
	Let $v:\R\to\R$ satisfy
\begin{enumerate}
	\item $v(x)=\oh{x^2}$ as $x\to0$;
	\item $v$ is locally bounded;
	\item $v$ is $\nu$-a.e. continuous; and
	\item there exists a function $u\in\mathcal{S}(\nu)$ for which
$$
	\limsup_{\abs{x}\to\infty}\dfrac{\abs{v(x)}}{u(x)}<\infty.
$$
\end{enumerate}
	Then
$$
	\tau^{-1}\Exp{v(X_\tau)}\to \int_\R v(x)\,\nu({\rm d}x)\textrm{ as }\tau\to0^+.
$$
\end{Theorem}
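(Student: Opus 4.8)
The plan is to separate the ``large jumps'' of $X$ from the rest, using that over a short interval $X$ either has no large jump (and then sits near $0$) or exactly one (which regenerates an integral against $\nu$), two or more large jumps being negligible. Fix $\epsilon>0$ and write $X=Y^\epsilon+Z^\epsilon$, where $Z^\epsilon$ is the compound Poisson process formed by the jumps of $X$ of absolute size $>\epsilon$ --- rate $\lambda_\epsilon:=\nu(|x|>\epsilon)<\infty$, jump law $\nu|_{\{|x|>\epsilon\}}/\lambda_\epsilon$ --- and $Y^\epsilon$ is the independent remainder (drift, Brownian part, and compensated jumps of size at most $\epsilon$). Being a L\'evy process with bounded jumps, $Y^\epsilon$ is right-continuous at $0$, so $Y^\epsilon_\tau\to0$ as $\tau\to0^+$; moreover $\Exp{(Y^\epsilon_\tau)^2}\le(C_0+\oh{1})\tau$ with $C_0:=\sigma^2+\int_{|x|\le1}x^2\,\nu(\mydiff x)$ not depending on $\epsilon$, and $Y^\epsilon$ has exponential moments of every order, bounded over $\tau\in(0,1]$; since every member of $\mathcal{S}(\nu)$ and its submultiplicative factor grow at most exponentially, this gives $\sup_{0<\tau\le1}\Exp{u(Y^\epsilon_\tau)}<\infty$ for $u\in\mathcal{S}(\nu)$. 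Conditioning on the number $N_\tau$ of jumps of $Z^\epsilon$ in $[0,\tau]$ and using independence of $Y^\epsilon$ and $Z^\epsilon$,
$$\Exp{v(X_\tau)}=e^{-\lambda_\epsilon\tau}\Exp{v(Y^\epsilon_\tau)}+\tau e^{-\lambda_\epsilon\tau}\int_{|x|>\epsilon}\Exp{v(Y^\epsilon_\tau+x)}\,\nu(\mydiff x)+\Exp{v(X_\tau)\mathbbm{1}_{\{N_\tau\ge2\}}},$$
the three terms being the contributions of $N_\tau=0$, $N_\tau=1$ and $N_\tau\ge2$.

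For fixed $\epsilon$ I would first dispose of the two ``outer'' terms. In the middle term, $Y^\epsilon_\tau\to0$ in probability and $v$ is $\nu$-a.e.\ continuous, so $v(Y^\epsilon_\tau+x)\to v(x)$ for $\nu$-a.e.\ $x$ with $|x|>\epsilon$; dominated convergence then gives the limit $\int_{|x|>\epsilon}v(x)\,\nu(\mydiff x)$, the majorant coming from hypothesis~(4) (bounding $|v|$ above by $Cu$ plus a constant, $u\in\mathcal{S}(\nu)$) together with the subadditive/submultiplicative factorisation, which after a short computation yields an $x$-dependent, $\tau$-uniform function that is $\nu$-integrable on $\{|x|>\epsilon\}$ (here one uses local boundedness on $\{\epsilon\le|x|\le1\}$ and $\nu$-integrability past $1$). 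For the last term I would show it is $\Oh{\tau^2}$: bounding $|v|$ as before, writing $u$ of the sum $Y^\epsilon_\tau+J_1+\cdots+J_{N_\tau}$ as a product via submultiplicativity and factorising over the independent pieces, the $\{N_\tau\ge2\}$ tail of the Poisson series contributes $e^{-\lambda_\epsilon\tau}\sum_{n\ge2}(\alpha c_\epsilon\tau)^n/n!=\Oh{\tau^2}$, where $c_\epsilon:=\int_{|x|>\epsilon}u(x)\,\nu(\mydiff x)<\infty$ and the surviving factor $\Exp{u(Y^\epsilon_\tau)}$ stays bounded; after dividing by $\tau$ this term is $\oh{1}$.

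The no-large-jump term is the crux, and this is where hypothesis~(1) is used. One should \emph{not} expect $\tau^{-1}\Exp{v(Y^\epsilon_\tau)}$ to vanish --- it tends to $\int_{|x|\le\epsilon}v(x)\,\nu(\mydiff x)$ --- so it suffices to make it small. Given $\delta>0$, pick $\rho>0$ with $|v(x)|\le\delta x^2$ for $|x|\le\rho$, and then take $\epsilon<\rho/2$. On $\{|Y^\epsilon_\tau|\le\rho\}$ one has $\Exp{|v(Y^\epsilon_\tau)|\,\mathbbm{1}_{\{|Y^\epsilon_\tau|\le\rho\}}}\le\delta\,\Exp{(Y^\epsilon_\tau)^2}\le\delta(C_0+\oh{1})\tau$; on $\{|Y^\epsilon_\tau|>\rho\}$, since the jumps of $Y^\epsilon$ are bounded by $\epsilon<\rho/2$, a Chernoff bound with a $\tau$-dependent exponent gives $\Prob{\sup_{s\le\tau}|Y^\epsilon_s|>\rho}=\oh{\tau^k}$ for every $k$, and the bound on $|v|$ then makes this part $\oh{\tau}$. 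Combining the three terms, for the $\epsilon$ just chosen,
$$\limsup_{\tau\to0^+}\left|\tau^{-1}\Exp{v(X_\tau)}-\int_{|x|>\epsilon}v(x)\,\nu(\mydiff x)\right|\le\delta C_0,$$
while $\left|\int_{|x|\le\epsilon}v(x)\,\nu(\mydiff x)\right|\le\delta\int_{|x|\le1}x^2\,\nu(\mydiff x)=\delta C_0$ by hypothesis~(1); letting $\delta\to0^+$ (hence $\rho,\epsilon\to0^+$) yields $\tau^{-1}\Exp{v(X_\tau)}\to\int_\R v(x)\,\nu(\mydiff x)$.

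The main obstacle is extracting a genuine $\oh{\tau}$ from the no-large-jump contribution: this is exactly what forces the coupling $\epsilon<\rho/2$, so that the small jumps of $Y^\epsilon$ cannot on their own drag $|Y^\epsilon_\tau|$ past $\rho$ without an improbably large number of them, and a $\tau$-dependent exponential tail estimate in place of a fixed-parameter one. One must also keep the order of the quantifiers straight ($\delta$, then $\rho$, then $\epsilon$, then $\tau\to0^+$), precisely because $\tau^{-1}\Exp{v(Y^\epsilon_\tau)}$ does not itself tend to $0$; the remaining ingredients --- the two limit passages and the Poisson-tail bound --- are routine once the $\mathcal{S}(\nu)$-domination and the submultiplicative/subadditive manipulations have been set up.
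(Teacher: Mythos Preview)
The paper does not give its own proof of this statement: it is quoted as Theorem~1.1 of Figueroa-L\'opez~\cite{FigLopez} and used as a black box, as the two Remarks immediately following the statement make explicit. There is therefore nothing in the paper to compare your argument against.

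For what it is worth, your sketch follows the decomposition that is standard for this result (and is, in outline, the one used in~\cite{FigLopez} and earlier in~\cite{RuschendorfWoerner}): peel off the jumps of size larger than $\epsilon$ as an independent compound Poisson process, condition on the number $N_\tau$ of such jumps, and argue that $N_\tau=1$ contributes $\tau\int_{|x|>\epsilon}v\,\nu(\mydiff{x})$, that $N_\tau\ge2$ is $\Oh{\tau^2}$ via the $\mathcal{S}(\nu)$-domination and submultiplicativity, and that the $N_\tau=0$ term is made small by the $\oh{x^2}$ hypothesis once $\epsilon$ is tied to $\rho$. Your emphasis on the order of quantifiers ($\delta$, then $\rho$, then $\epsilon$, then $\tau\to0^+$) is exactly the right point of care. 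One minor quibble: the clause ``since the jumps of $Y^\epsilon$ are bounded by $\epsilon<\rho/2$'' as justification for the super-polynomial decay of $\Prob{|Y^\epsilon_\tau|>\rho}$ is slightly off-target, because $Y^\epsilon$ also carries the Brownian component; the conclusion is nonetheless correct, since a L\'evy process with bounded jumps has exponential moments of every order uniformly over $\tau\in(0,1]$, and a Chernoff bound with parameter growing as $\tau\to0^+$ then yields decay faster than any power of $\tau$.
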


\begin{Remark}
 The theorem as stated here is actually a particular case of Theorem 1.1 in \cite{FigLopez}, taking condition (1) of Condition 1.1 in \cite{FigLopez}.
 Indeed, the theorem allows for various behaviours of $v$ in a neighbourhood of the origin depending on the characteristic triplet of $X$.
\end{Remark}

\begin{Remark}
We remark again that the result is well-known for bounded continuous 
functions vanishing on a neighbourhood of the origin, see \cite{Sato}, Corollary 8.9.
\end{Remark}

\begin{Theorem}[Support of the L{\'e}vy process at fixed times, Theorem 24.10 in \cite{Sato}]\label{thm:LevySupport}
 Suppose that $X$ is a L{\'e}vy process of drift $\gamma$ which is either of infinite activity and finite variation or of finite activity (and hence variation) with $0$ in the support of $\nu$. Then
\begin{enumerate}
 \item If the support of $\nu$ is a subset of $[0,\infty)$, then $\Prob{X_\tau\in [\gamma\tau,\infty)}=1$.
 \item If the support of $\nu$ is a subset of $(-\infty,0]$, then $\Prob{X_\tau\in (-\infty,\gamma\tau]}=1$.
\end{enumerate}
\end{Theorem}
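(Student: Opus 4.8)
The plan is to deduce this from the L\'evy--It\^o decomposition rather than to reproduce Sato's full computation of the support: we need only the one-sided \emph{containment} asserted here, not the exact identification of the support of the law of $X_\tau$ that Theorem~24.10 of \cite{Sato} provides, and for the containment the extra hypothesis ``$0\in\mathrm{supp}(\nu)$'' in the finite-activity case plays no role (it serves only to pin down the support exactly). I would prove statement~(1) and then obtain~(2) by applying~(1) to the reflected process $-X$.

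First I would record the jump representation. The finite-variation hypothesis forces the Gaussian coefficient of $X$ to vanish and gives $\int_\R(1\wedge\abs{x})\,\nu(\mydiff{x})<\infty$, hence in particular $\int_{\abs{x}\le1}\abs{x}\,\nu(\mydiff{x})<\infty$. In the finite-activity case this is elementary: $X$ is a compound Poisson process with L\'evy measure $\nu$ plus a linear drift, so almost surely $X_\tau=\gamma\tau+\sum_{0<s\le\tau}\Delta X_s$ with only finitely many nonzero summands. In the infinite-activity finite-variation case I would start from the L\'evy--It\^o decomposition (\cite{Sato}, Theorem~19.3), in which a compensated small-jump martingale appears; since $\int_{\abs{x}\le1}\abs{x}\,\nu(\mydiff{x})<\infty$, that compensator is finite and may be subtracted from the martingale and absorbed into the drift, leaving again $X_\tau=\gamma\tau+\sum_{0<s\le\tau}\Delta X_s$ almost surely, now with $\gamma$ understood as the drift of the finite-variation representation and the countable jump sum converging absolutely a.s.\ by the same integrability.

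With this representation in hand the conclusion is immediate. If $\mathrm{supp}(\nu)\subseteq[0,\infty)$ then every jump of $X$ takes a value in $\mathrm{supp}(\nu)$, so $\Delta X_s\ge0$ for all $s$, $\P$-a.s.; summing, $\sum_{0<s\le\tau}\Delta X_s\ge0$ a.s., whence $X_\tau\ge\gamma\tau$ a.s., i.e.\ $\Prob{X_\tau\in[\gamma\tau,\infty)}=1$. For~(2) I would apply this to $-X$: its L\'evy measure is the image of $\nu$ under $x\mapsto-x$ and is therefore supported in $[0,\infty)$, its finite-variation drift is $-\gamma$, and it is of the same activity type; part~(1) then gives $-X_\tau\ge-\gamma\tau$ a.s., i.e.\ $\Prob{X_\tau\in(-\infty,\gamma\tau]}=1$.

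The one place that needs care is the de-compensation step in the infinite-activity case: one must keep in mind that ``drift'' in the statement means the drift of the bounded-variation representation of $X$, which in general differs from the first component of the $\mathbbm{1}_{\abs{x}\le1}$-truncated L\'evy--Khintchine triplet fixed earlier (the two differing by $\int_{\abs{x}\le1}x\,\nu(\mydiff{x})$), since it is precisely this constant that results once the small-jump compensator has been folded into the drift. Everything else --- absolute convergence of the jump sum, measurability, and the reflection argument --- is routine, and the main ``obstacle'' is really just this bookkeeping.
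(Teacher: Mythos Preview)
The paper does not give its own proof of this statement: it is quoted from Sato's monograph as Theorem~24.10 and followed only by a remark that the version stated is a simplification sufficient for the paper's purposes. Your proposal therefore goes beyond what the paper itself does by supplying an argument.

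Your argument via the L\'evy--It\^o decomposition is correct and is the standard route to the one-sided containment: finite variation forces the Gaussian part to vanish and makes the small-jump compensator absolutely convergent, so the process is a linear drift plus an (absolutely convergent) sum of jumps, and one-signed jumps give the inequality. You are also right that the hypothesis $0\in\mathrm{supp}(\nu)$ in the finite-activity case is not needed for the containment asserted here; Sato uses it to identify the support \emph{exactly} as $[\gamma\tau,\infty)$ rather than merely a subset of it. Your caveat about the meaning of ``drift'' is well taken and worth emphasising: the $\gamma$ in this theorem must be read as the drift of the bounded-variation representation (Sato's $\gamma_0$), which differs from the first coordinate of the $\mathbbm{1}_{\abs{x}\le1}$-truncated triplet fixed earlier in the paper by $\int_{\abs{x}\le1}x\,\nu(\mydiff{x})$. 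The paper's notation is slightly loose on this point, and the way the theorem is later applied (e.g.\ in the proof of part~(v) of the limiting implied volatility theorem) tacitly relies on this reading.
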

\begin{Remark}
 This is a simplified version of Theorem 24.10 of \cite{Sato} that is sufficient for our purposes.
\end{Remark}

\subsection{Implied volatility asymptotics}
The implied volatility explosion we will demonstrate comes from Theorem \ref{thm:FigLopez} combined with the following
fact.
\begin{Theorem}[Roper and Rutkowski, Corollary 5.1 in \cite{me}]\label{thm:RoperRutkowski}
 Let $\Sigma_t(K,\tau)$ denote the time $t$ implied volatility of a European call with strike $K>0$ and expiry $t+\tau$.
 Let the stock price process $\widehat{S}$ be a positive (c{\`a}dl{\`a}g) martingale on an appropriate filtered probability space. Suppose that $K>0$ and $K\neq \widehat{S}_t$. If there exists a constant $\delta>0$, possibly depending on $\omega,K$ and $t$, such that, for every $\tau\in(0,\delta)$,
\begin{equation*}
  \Exp{(\widehat{S}_{t+\tau}-K)^+|\fancy{F}_t}>(\widehat{S}_t-K)^+,
\end{equation*}
then
  \begin{equation*}
\lim_{\tau\to0^+} \Sigma_t(K,\tau)=
\lim_{\tau\to0^+}\dfrac{\abs{\ln(K/\widehat{S}_t)}}{\sqrt{-2\tau\ln(\Exp{(\widehat{S}_{t+\tau}-K)^+|\fancy{F}_t}-(\widehat{S}_t-K)^+)}}
\end{equation*}
in the sense that the left-hand side limit exists (is infinite,
respectively) iff the right-hand side limit exists (is infinite, respectively)
and then they are equal. Also statements are considered in the $\P$-.a.s sense.
\end{Theorem}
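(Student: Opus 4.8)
The plan is to reduce the statement to the asymptotics of the Black--Scholes pricing function as the call price collapses onto its intrinsic value. Write $m=\ln(\widehat{S}_t/K)$, so that $\abs{m}=\abs{\ln(K/\widehat{S}_t)}>0$ by hypothesis, put $c(\tau)=\Exp{(\widehat{S}_{t+\tau}-K)^+\mid\fancy{F}_t}$, and let $z(\tau)=c(\tau)-(\widehat{S}_t-K)^+$ be the time value of the call. Let $C^{\mathrm{BS}}(w)$ denote the Black--Scholes price of a call on $\widehat{S}_t$ with strike $K$, zero rates, and total standard deviation $w=\theta\sqrt{\tau}\in[0,\infty]$, and set $T(w)=C^{\mathrm{BS}}(w)-(\widehat{S}_t-K)^+$. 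For $m\neq 0$ the map $T$ is a continuous, strictly increasing bijection of $[0,\infty)$ onto $[0,\min(\widehat{S}_t,K))$ with $T(0)=0$; write $W$ for its inverse. Since the hypothesis gives $z(\tau)>0$, and since $(s-K)^+<s$ for every $s>0$ forces $c(\tau)<\Exp{\widehat{S}_{t+\tau}\mid\fancy{F}_t}=\widehat{S}_t$, we get $0<z(\tau)<\min(\widehat{S}_t,K)$; hence the implied volatility is the finite, strictly positive number $\Sigma_t(K,\tau)=W(z(\tau))/\sqrt{\tau}$, while the right-hand side of the assertion equals $\abs{m}/\sqrt{-2\tau\ln z(\tau)}$.

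The analytic core is the estimate
$$
\ln T(w)=-\frac{m^2}{2w^2}+\oh{w^{-2}}\qquad\text{as }w\to 0^+ .
$$
I would prove this first in the out-of-the-money case $m<0$: here $T(w)=\widehat{S}_t\Phi(d_1)-K\Phi(d_2)$ with $d_1=m/w+w/2$ and $d_2=m/w-w/2$, both tending to $-\infty$, and one feeds the Mill's-ratio expansion $\Phi(-x)=x^{-1}\phi(x)\left(1-x^{-2}+\Oh{x^{-4}}\right)$ into each term. The identity $\widehat{S}_t\phi(\abs{d_1})=K\phi(\abs{d_2})$ (from $d_1^2-d_2^2=2m$ and $\widehat{S}_t=Ke^{m}$) lets the two leading terms combine into $\widehat{S}_t\phi(\abs{d_1})\left(\abs{d_1}^{-1}-\abs{d_2}^{-1}\right)$; using $\abs{d_2}-\abs{d_1}=w$ and $\abs{d_1}\abs{d_2}\sim m^2/w^2$, and checking that the remaining Mill's-ratio terms are of smaller order, one finds $T(w)\sim(2\pi)^{-1/2}(w^3/m^2)e^{m/2}e^{-w^2/8}K\,e^{-m^2/(2w^2)}$, and taking logarithms wipes out everything but $-m^2/(2w^2)$. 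In the in-the-money case $m>0$, put--call parity at zero rates makes the call time value equal to the (now out-of-the-money) put time value, and the same computation with the roles of $\widehat{S}_t$ and $K$ exchanged again gives $\ln T(w)\sim-m^2/(2w^2)$; note $\abs{\ln(K/\widehat{S}_t)}^2=m^2$ either way.

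It then remains to invert and to pass to the limit. As $z\to 0^+$ we have $W(z)\to 0$ (continuity of $W$ and $W(0)=0$), so $\ln z=\ln T(W(z))=-\frac{m^2}{2W(z)^2}(1+\oh{1})$, whence $W(z)\sim\abs{m}/\sqrt{-2\ln z}$. Separately, $z(\tau)\to 0$ $\P$-a.s.\ as $\tau\to 0^+$: by right-continuity of the paths $\widehat{S}_{t+\tau}\to\widehat{S}_t$ a.s., the martingale $(\widehat{S}_s)_{s\in[t,t+1]}$ is uniformly integrable (being closed by $\widehat{S}_{t+1}$), so $(\widehat{S}_{t+\tau}-K)^+\to(\widehat{S}_t-K)^+$ in $L^1$ and, for a suitable right-continuous-in-$\tau$ version of the conditional expectation, $c(\tau)\to(\widehat{S}_t-K)^+$ a.s. Composing the two, $\Sigma_t(K,\tau)=W(z(\tau))/\sqrt{\tau}\sim\abs{m}/\sqrt{-2\tau\ln z(\tau)}$, that is, the ratio of $\Sigma_t(K,\tau)$ to $\abs{\ln(K/\widehat{S}_t)}/\sqrt{-2\tau\ln z(\tau)}$ tends to $1$ as $\tau\to0^+$. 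Two eventually-positive functions whose ratio tends to $1$ have the same limit in $[0,\infty]$ — one has a limit (finite or infinite) if and only if the other does — which is exactly the asserted equivalence, in both the pathwise and the almost-sure forms.

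The main obstacle is the Mill's-ratio estimate of the second step: $T(w)$ is a difference of two terms individually comparable to $\widehat{S}_t\phi(\abs{d_1})/\abs{d_1}$ and $K\phi(\abs{d_2})/\abs{d_2}$ whose leading parts coincide, so the expansion must be carried far enough for the $w^3/m^2$ prefactor to survive the cancellation. The invertibility of $T$, the inversion of the asymptotic relation, and the convergence $z(\tau)\to 0$ are all routine by comparison.
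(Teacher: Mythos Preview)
Your argument is correct and complete, but it is a genuinely different route from what the paper does. The paper does not actually prove the statement: it simply invokes Corollary~5.1 of \cite{me} and observes that when $\widehat{S}$ is a positive c{\`a}dl{\`a}g martingale, the call price $\Exp{(\widehat{S}_{t+\tau}-K)^+\mid\fancy{F}_t}$ automatically satisfies the three standing hypotheses of that cited result (positivity of the spot, the no-arbitrage bounds, and the strict-inequality assumption already in the statement). In other words, the paper's proof is a one-line reduction to an external reference.

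You instead supply the full analytic content that the external reference presumably contains: you parametrise the Black--Scholes price by total variance $w$, extract the leading behaviour $\ln T(w)\sim -m^2/(2w^2)$ via the Mill's-ratio expansion (with the key cancellation handled through the identity $\widehat{S}_t\phi(\abs{d_1})=K\phi(\abs{d_2})$), invert to get $W(z)\sim\abs{m}/\sqrt{-2\ln z}$, and close with the almost-sure convergence $z(\tau)\to 0$ using uniform integrability of the closed martingale. This buys you a self-contained proof that does not depend on the availability of \cite{me}; the paper's approach buys brevity at the cost of that dependence. Both are valid; yours is the more substantive contribution.
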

\begin{proof}
In Corollary 5.1 of \cite{me}, take the call price to be the expectation of the call payoff. 
The three requisite assumptions on the call pricing function as well as the positivity of $\widehat{S}_t$
are then straightforward consequences of the assumption that $\widehat{S}$ is a positive 
(c{\`a}dl{\`a}g) martingale. 
\end{proof}
\begin{Remark}
 The behaviour of at-the-money implied volatilities in terms of the behaviour of the
at-the-money call is also described in \cite{me}. We do not use it here.
\end{Remark}

\section{Results}\label{sec:Results}
In this section we present our results. Proofs of the lemmas are relegated to the Appendix.

\subsection{Lemmas}
\begin{Lemma}\label{lem:eL2}
Let $S$ be defined by \eqref{eq:SeqXexp} and assume that $S$ satisfies the constraints set out in Equation \eqref{eq:martconstraints}.
For every $\tau>0$ and $K>0$, 
\begin{enumerate}[({A}i)]
	\item $\Exp{(S_\tau-K)^+}<S_0$.
	\item $\Exp{(S_\tau-K)^+}\geq (S_0-K)^+$.
	\item\label{itm:CattainsLBzero} Suppose that $S_0\leq K$, then $\Exp{(S_\tau-K)^+}=(S_0-K)^+$ iff $\Prob{S_\tau>K}=\Prob{X_\tau>\ln(K/S_0)}=0$.
	\item\label{itm:CattainsLBpos}  Suppose that $S_0> K$, then $\Exp{(S_\tau-K)^+}=(S_0-K)^+$ iff $\Prob{S_\tau<K}=\Prob{X_\tau<\ln(K/S_0)}=0$.
\end{enumerate}
In addition:
\begin{enumerate}[({B}i)]
	\item There exists a non-trivial L{\'e}vy process such that (A\ref{itm:CattainsLBzero}) holds for some $\tau,K>0$.
	\item There exists a non-trivial L{\'e}vy process such that (A\ref{itm:CattainsLBpos}) holds for some $\tau,K>0$.
\end{enumerate}
\end{Lemma}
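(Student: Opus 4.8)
The plan is to settle the four assertions of part (A) by elementary manipulations of the payoff, using only that $S_\tau = S_0 e^{X_\tau}$ is almost surely strictly positive and that $\Exp{S_\tau} = S_0$, and then to construct the two non-trivial L{\'e}vy processes required for part (B) with the help of Theorem \ref{thm:LevySupport}. For (Ai) I would write $(S_\tau-K)^+ = S_\tau - (S_\tau \wedge K)$, observe that $S_\tau \wedge K > 0$ a.s. because $S_0 > 0$ and $K > 0$, and conclude $\Exp{(S_\tau-K)^+} = S_0 - \Exp{S_\tau \wedge K} < S_0$, since a non-negative random variable that is almost surely strictly positive has strictly positive expectation. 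For (Aii) I would simply combine the two bounds $(S_\tau - K)^+ \geq S_\tau - K$ and $(S_\tau - K)^+ \geq 0$, take expectations, and use the martingale property to get $\Exp{(S_\tau-K)^+} \geq \max(S_0 - K, 0) = (S_0-K)^+$.

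For (Aiii), the hypothesis $S_0 \leq K$ makes the lower bound vanish, so the statement reduces to: $\Exp{(S_\tau - K)^+} = 0$ iff $(S_\tau - K)^+ = 0$ a.s. iff $\Prob{S_\tau > K} = 0$, and the event $\{S_\tau > K\}$ is then rewritten as $\{X_\tau > \ln(K/S_0)\}$. For (Aiv), where $S_0 > K$ and the lower bound is $S_0 - K$, I would express the gap as $\Exp{(S_\tau - K)^+} - (S_0 - K) = \Exp{(S_\tau - K)^+ - (S_\tau - K)}$, noting that the integrand is non-negative and vanishes precisely on $\{S_\tau \geq K\}$; hence the gap is zero iff $\Prob{S_\tau < K} = 0$, and once more the event is rewritten in terms of $X_\tau$.

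For part (B) I would exhibit compound Poisson processes whose drift is the one forced by the martingale condition \eqref{eq:martconstraints}. For (Bi), take $\sigma = 0$ and $\nu$ a finite measure supported in $(-\infty, 0]$ — e.g. an atom at some point $-a$ with $a \in (0,1]$, for which the one-sidedness $X_\tau = c\tau - \sum_{i=1}^{N_\tau} Y_i \leq c\tau$ a.s. (the $Y_i>0$ being the jump sizes, $c$ the drift of the finite-variation process) is immediate, or, if one prefers to invoke Theorem \ref{thm:LevySupport} verbatim, a measure with $0$ in its support. A short computation from \eqref{eq:martconstraints} shows $c > 0$; choosing $K = S_0 e^{c\tau}$, which exceeds $S_0$, gives $\Prob{X_\tau > \ln(K/S_0)} = 0$, so (Aiii) yields $\Exp{(S_\tau - K)^+} = (S_0 - K)^+$. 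For (Bii), take instead $\nu$ supported in $[0,\infty)$ with bounded jumps; then the martingale drift $c$ is negative and $X_\tau \geq c\tau$ a.s., so with $K = S_0 e^{c\tau} < S_0$ we have $\Prob{X_\tau < \ln(K/S_0)} = 0$ and (Aiv) applies. Both processes are non-trivial since $\nu \neq 0$.

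The only genuinely delicate point is part (B): one must check that the chosen $\nu$ satisfies all the standing hypotheses (Radon, $\int_\R(1 \wedge y^2)\,\nu(\textrm{d}y) < \infty$, and $\int_{\abs{y}>1} e^y\,\nu(\textrm{d}y) < \infty$), that the resulting $X$ is non-trivial and, if one goes that route, that Theorem \ref{thm:LevySupport} is applicable (which is exactly why taking $0$ in the support of $\nu$, or reading the one-sidedness of $X_\tau$ off the explicit drift-plus-jumps form of the compound Poisson process, is convenient), and — the actual content — that imposing the drift from \eqref{eq:martconstraints} leaves that drift with the correct sign, so that the strike $K$ making the relevant probability vanish sits on the side of $S_0$ demanded by (Aiii), respectively (Aiv). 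Everything in part (A) is routine.
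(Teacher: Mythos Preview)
Your proof is correct and follows essentially the same approach as the paper's: parts (Ai)--(Aiv) are handled by the same elementary manipulations (the paper phrases (Aiv) via a chain of equivalences on $\int(y\wedge K)\,\Prob{S_\tau\in\textrm{d}y}$, but your put--call--parity computation $\Exp{(S_\tau-K)^+}-(S_0-K)=\Exp{(K-S_\tau)^+}$ is equivalent and arguably cleaner), and part (B) is done by the same construction of one-sided L{\'e}vy processes whose martingale-forced drift has the correct sign so that a suitable $K$ can be placed on the required side of $S_0$. The only cosmetic difference is that the paper exhibits infinite-activity, finite-variation processes and invokes Theorem~\ref{thm:LevySupport}, whereas you use compound Poisson processes and read the one-sidedness of $X_\tau$ directly off the drift-plus-jumps decomposition; both routes work.
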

\begin{proof}
 See the Appendix.
\end{proof}

\begin{Lemma}\label{lem:CallsAndPuts}
Let $S$ be defined by \eqref{eq:SeqXexp} and assume that $S$ satisfies the constraints set out in Equation \eqref{eq:martconstraints}.
Then Theorem \ref{thm:FigLopez} applies to $c(x)=(S_0e^x-K)^+$ (respectively $p(x)=(K-S_0e^x)^+$) provided that $K>S_0$ (respectively $K<S_0$) and $K>0$.
\end{Lemma}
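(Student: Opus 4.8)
The plan is to verify, one at a time, the four hypotheses of Theorem~\ref{thm:FigLopez} for the functions $c$ and $p$ of Notation~\ref{not:cp}. The first move is to use the hypotheses $K>S_0$ (respectively $K<S_0$) to dispose of the behaviour near the origin: if $K>S_0$ then $S_0e^x<K$ for all $x$ in a neighbourhood of $0$, so $c$ vanishes identically there and condition~(1), namely $v(x)=\oh{x^2}$ as $x\to0$, is immediate; symmetrically, if $K<S_0$ then $p$ vanishes on a neighbourhood of $0$. Since $c$ and $p$ are continuous on all of $\R$, they are locally bounded and everywhere (hence in particular $\nu$-a.e.) continuous, which takes care of conditions~(2) and~(3).

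The only real content is condition~(4), the existence of a dominating function in $\mathcal{S}(\nu)$. For the call I would take $u(x)=e^x$, written as the product of the constant function $q\equiv1$, which is trivially subadditive, and $k(x)=e^x$, which is submultiplicative since $e^{x+y}=e^xe^y$. One then checks that $\limsup_{\abs{x}\to\infty}c(x)/e^x<\infty$: as $x\to+\infty$ the ratio $(S_0e^x-K)/e^x=S_0-Ke^{-x}$ tends to $S_0$, and as $x\to-\infty$ it is identically $0$. Finally one must confirm $\int_{\abs{x}>1}e^x\,\nu({\rm d}x)<\infty$; this splits as the integral over $x>1$, which is finite precisely because of the martingale constraint~\eqref{eq:martconstraints}, plus the integral over $x<-1$, which is bounded by $e^{-1}\nu(\curlybr{x:x<-1})<\infty$ since $\nu$ is a Radon measure giving finite mass to sets bounded away from the origin. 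Hence $u\in\mathcal{S}(\nu)$ and Theorem~\ref{thm:FigLopez} applies to $c$.

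For the put the argument is shorter: $0\le p(x)=(K-S_0e^x)^+\le K$ for every $x\in\R$, so $p$ is bounded, and one may simply take $u$ to be the constant $K$ (for instance $q\equiv K$, $k\equiv1$), which is plainly in $\mathcal{S}(\nu)$ because $\int_{\abs{x}>1}K\,\nu({\rm d}x)=K\,\nu(\curlybr{x:\abs{x}>1})<\infty$, while $\limsup_{\abs{x}\to\infty}p(x)/K\le1$.

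I do not expect any serious obstacle: the lemma is essentially a verification, and the proof in the Appendix should just be the bookkeeping above. The one point at which a genuine feature of the model enters is the integrability of $e^x$ against $\nu$ in the call case, which is exactly the martingale condition~\eqref{eq:martconstraints}; for the put, boundedness of the payoff makes condition~(4) automatic, so no integrability assumption on $\nu$ beyond its being a L\'evy measure is needed there.
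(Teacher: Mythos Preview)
Your proposal is correct and matches the paper's proof essentially line for line: the Appendix also disposes of conditions (1)--(3) by noting that $c$ and $p$ are continuous and vanish near the origin under the respective strike hypotheses, and for condition (4) it uses exactly the same dominating functions you chose, namely $u(x)=1\cdot e^x$ for the call (invoking the martingale constraint \eqref{eq:martconstraints}) and the constant $u\equiv K$ for the put. Your write-up is slightly more explicit about the limsup computations and the splitting of $\int_{\abs{x}>1}e^x\,\nu({\rm d}x)$, but there is no substantive difference.
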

\begin{proof}
 See the Appendix.
\end{proof}

\subsection{Call option asymptotics}
\begin{Theorem}[Small-expiry asymptotics of the European call option: General case]\label{prop:MostGeneralResult}

Suppose that $S_\tau=S_0e^{X_\tau}$, for all $\tau\geq0$, where $X$ is a L{\'e}vy process satisfying \eqref{eq:martconstraints} and $S_0>0$. Fix $K>0$ with $K\neq S_0$. Recall that $c(x)=(S_0e^x-K)^+$ and $p(x)=(K-S_0e^x)^+$. 

Then, for every $\tau,K>0$ where $K\neq S_0$,
$$
  \Exp{(S_{\tau}-K)^+} - (S_0-K)^+ = \Oh{\tau},\quad\tau\to0^+.
$$
\end{Theorem}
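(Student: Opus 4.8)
The plan is to reduce the claim to an application of Figueroa-L{\'o}pez's theorem (Theorem~\ref{thm:FigLopez}) via the put-call parity identity. Since $S$ is a martingale, $\Exp{S_\tau} = S_0$, and hence
\begin{equation*}
	\Exp{(S_\tau-K)^+} - (S_0-K)^+ = \Exp{(S_\tau-K)^+} - \Exp{(K-S_\tau)^+} - (S_0-K).
\end{equation*}
When $S_0<K$ we have $(S_0-K)^+=0$, and I would write $\Exp{(S_\tau-K)^+} = \Exp{c(X_\tau)}$ directly. When $S_0>K$ I would instead use $\Exp{(S_\tau-K)^+} = \Exp{(S_\tau-K)} + \Exp{(K-S_\tau)^+} = (S_0-K) + \Exp{p(X_\tau)}$, so that $\Exp{(S_\tau-K)^+} - (S_0-K)^+ = \Exp{p(X_\tau)}$. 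In either case the left-hand side equals $\Exp{g(X_\tau)}$ where $g\in\{c,p\}$ is the put/call function corresponding to an out-of-the-money option at time zero.

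Next I would invoke Lemma~\ref{lem:CallsAndPuts}, which says precisely that Theorem~\ref{thm:FigLopez} applies to $c$ when $K>S_0$ and to $p$ when $K<S_0$. Applying the conclusion of Theorem~\ref{thm:FigLopez} to the relevant $g$ gives
\begin{equation*}
	\tau^{-1}\Exp{g(X_\tau)}\to\int_\R g(x)\,\nu(\mathrm{d}x)\quad\text{as }\tau\to0^+.
\end{equation*}
Since the right-hand integral is a finite constant (finiteness being part of what Lemma~\ref{lem:CallsAndPuts} guarantees, as it is subsumed in the hypothesis $u\in\mathcal{S}(\nu)$), a convergent ratio $\Exp{g(X_\tau)}/\tau$ is in particular bounded on a right-neighbourhood of $0$, which is exactly the statement $\Exp{g(X_\tau)} = \Oh{\tau}$ as $\tau\to0^+$. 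Combining with the parity identity of the first paragraph yields $\Exp{(S_\tau-K)^+} - (S_0-K)^+ = \Oh{\tau}$, as claimed.

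I do not expect a genuine obstacle here, since the heavy lifting is delegated to the two lemmas and to Figueroa-L{\'o}pez's theorem; the only point requiring care is the bookkeeping in the put-call parity step, making sure the martingale property is used correctly and that one always lands on the out-of-the-money function (the in-the-money function is not dominated by an $\mathcal{S}(\nu)$ function near infinity, because $c(x)$ grows like $e^x$, which is why the parity rewriting is essential rather than cosmetic). One subtlety worth flagging explicitly in the write-up: Theorem~\ref{thm:FigLopez} delivers the sharp asymptotic equivalence, so in fact this proof establishes the stronger conclusion \eqref{eq:ExactCallasymptotics} whenever Lemma~\ref{lem:CallsAndPuts} applies; the $\Oh{\tau}$ statement of Theorem~\ref{prop:MostGeneralResult} is then just the weakened form that continues to hold (by monotonicity and the trivial bound, or by the forthcoming finer analysis) even in the degenerate cases excluded from Lemma~\ref{lem:CallsAndPuts}, such as those covered by Theorem~\ref{thm:LevySupport}. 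I would therefore present the main argument as above and then remark that the $\Oh{\tau}$ bound persists in the remaining cases because the one-sided Figueroa-L{\'o}pez limit still furnishes boundedness of $\tau^{-1}\Exp{g(X_\tau)}$ even when the limiting integral vanishes.
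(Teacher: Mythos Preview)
Your core argument is correct and is essentially the paper's own proof: apply Theorem~\ref{thm:FigLopez} via Lemma~\ref{lem:CallsAndPuts} to the out-of-the-money payoff ($c$ if $S_0<K$, $p$ if $S_0>K$), use put-call parity in the in-the-money case, and conclude $\Oh{\tau}$ from finiteness of the limiting integral.

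Two points in your commentary deserve correction, though they do not affect the validity of the proof. First, your stated reason for the necessity of the parity rewriting is wrong: $c(x)$ \emph{is} dominated at infinity by $e^x$, and $e^x$ \emph{is} in $\mathcal{S}(\nu)$ precisely because of the martingale constraint~\eqref{eq:martconstraints} (this is exactly how Lemma~\ref{lem:CallsAndPuts} is proved). The genuine obstruction is at the origin: Theorem~\ref{thm:FigLopez} requires $v(x)=\oh{x^2}$ as $x\to0$, and the in-the-money payoff is bounded away from zero there. Second, there are no ``degenerate cases excluded from Lemma~\ref{lem:CallsAndPuts}''. The lemma applies for every $K\neq S_0$, and hence so does Theorem~\ref{thm:FigLopez}; the limiting integral may vanish (e.g.\ when the jumps are one-sided), but the limit still exists and furnishes the $\Oh{\tau}$ bound directly, with no separate argument needed.
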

\begin{proof}
Recall the definition of $c$ and $p$ from Notation \ref{not:cp}. 
By Lemma \ref{lem:CallsAndPuts}, we may apply Theorem \ref{thm:FigLopez} to get that
\begin{equation}\label{eq:cplstar1}
	\lim_{\tau\to0^+} \dfrac{1}{\tau}\Exp{(S_{\tau}-K)^+}  = \int_\R c(x)\,\nu({\rm d}x),\quad (S_0<K),
\end{equation}
and
\begin{equation}\label{eq:cplstar2}
	\lim_{\tau\to0^+} \dfrac{1}{\tau}\Exp{(K-S_{\tau})^+}  = \int_\R p(x)\,\nu({\rm d}x),\quad (S_0>K).
\end{equation}
Now use put-call parity in \eqref{eq:cplstar2}. The claim is now clear: the right hand sides of \eqref{eq:cplstar1} and \eqref{eq:cplstar2}
lie in $[0,\infty)$. For example if $X$ has only positive jumps, then the right hand side of \eqref{eq:cplstar2} will be zero.
\end{proof}

\begin{Theorem}[Small-expiry asymptotics of the European call option]\label{prop:LessGeneralCallResult}

Suppose that $S_\tau=S_0e^{X_\tau}$, for all $\tau\geq0$, where $X$ is a L{\'e}vy process satisfying \eqref{eq:martconstraints} and $S_0>0$. 
Fix $K>0$. Recall that $c(x)=(S_0e^x-K)^+$ and $p(x)=(K-S_0e^x)^+$. 

Then, for all $S_0,K>0$ with $K\neq S_0$,
\begin{enumerate}[(i)]
	\item If $X$ has characteristic triplet $(0,0,0)$, then 
		$$
			\Exp{(S_\tau-K)^+}=(S_0-K)^+.
		$$
		for all $\tau\geq0$.
	\item If $X$ has characteristic triplet $(-\sigma^2/2,\sigma,0)$ with $\sigma\neq0$ then
		$$
			\Exp{(S_\tau-K)^+}=S_0\Phi\left(\dfrac{\ln(S_0/K)}{\sigma\sqrt{\tau}}-\dfrac{\sigma\sqrt{\tau}}{2}\right) - K\Phi\left(\dfrac{\ln(S_0/K)}{\sigma\sqrt{\tau}}+\dfrac{\sigma\sqrt{\tau}}{2}\right).
		$$
	\item\label{itm:S0ltK} If $S_0<K$, and $\int c(x) \nu({\rm d}x)>0$, then
		$$
			\Exp{(S_\tau-K)^+}-(S_0-K)^+\sim \tau \int c(x) \nu({\rm d}x),\quad\tau\to0^+.
		$$
	\item\label{itm:S0gtK} If $S_0>K$, and $\int_\R p(x) \nu({\rm d}x)>0$, then
		$$
			\Exp{(S_\tau-K)^+}-(S_0-K)^+\sim \tau \int_\R p(x) \nu({\rm d}x),\quad\tau\to0^+.
		$$
	\item Otherwise
		$$
			\Exp{(S_\tau-K)^+}-(S_0-K)^+ = \oh{\tau},\quad\tau\to0^+.
		$$
\end{enumerate}
\end{Theorem}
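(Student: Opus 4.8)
The plan is to take the five cases in turn; each follows quickly from results already in hand. For (i), the triplet $(0,0,0)$ is the trivial L\'evy process: no Brownian part, no jumps, and — by the martingale constraint \eqref{eq:martconstraints} — zero drift, so $X_\tau=0$ almost surely for every $\tau$, whence $S_\tau=S_0$ and $\Exp{(S_\tau-K)^+}=(S_0-K)^+$ for all $\tau\geq0$. For (ii), the triplet $(-\sigma^2/2,\sigma,0)$ gives $X_\tau=\sigma B_\tau-\sigma^2\tau/2$ for a standard Brownian motion $B$, so $S_\tau=S_0\exp(\sigma\sqrt{\tau}\,Z-\sigma^2\tau/2)$ with $Z$ standard normal; I would then run the usual Black--Scholes computation — write $(S_\tau-K)^+=S_\tau\mathbbm{1}_{\{S_\tau>K\}}-K\mathbbm{1}_{\{S_\tau>K\}}$, evaluate $\Prob{S_\tau>K}$, and complete the square in the first expectation — to reach the stated closed form.

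For (iii): since $S_0<K$ we have $(S_0-K)^+=0$, so $\Exp{(S_\tau-K)^+}-(S_0-K)^+=\Exp{c(X_\tau)}$. By Lemma \ref{lem:CallsAndPuts}, Theorem \ref{thm:FigLopez} applies to $c$, hence $\tau^{-1}\Exp{c(X_\tau)}\to\int_\R c(x)\,\nu({\rm d}x)$; since this limit is assumed strictly positive, dividing $\Exp{(S_\tau-K)^+}$ by $\tau\int_\R c(x)\,\nu({\rm d}x)$ yields the claimed equivalence. For (iv): now $S_0>K$, so $(S_0-K)^+=S_0-K$, and since $S$ is a martingale $\Exp{S_\tau}=S_0$; put--call parity (from the pointwise identity $(S_\tau-K)^+-(K-S_\tau)^+=S_\tau-K$) then gives $\Exp{(S_\tau-K)^+}-(S_0-K)^+=\Exp{(K-S_\tau)^+}=\Exp{p(X_\tau)}$. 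Lemma \ref{lem:CallsAndPuts} lets us apply Theorem \ref{thm:FigLopez} to $p$ (here $K<S_0$), and the assumed positivity of $\int_\R p(x)\,\nu({\rm d}x)$ finishes things exactly as in (iii).

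For (v): since cases (i)--(ii) exhaust the triplets with $\nu\equiv0$ and we are assuming $S_0\neq K$, being outside cases (i)--(iv) forces $\nu\not\equiv0$ together with either $S_0<K$ and $\int_\R c(x)\,\nu({\rm d}x)=0$ (using $c\geq0$), or $S_0>K$ and $\int_\R p(x)\,\nu({\rm d}x)=0$. In the first sub-case Theorem \ref{thm:FigLopez} (applicable by Lemma \ref{lem:CallsAndPuts}) gives $\tau^{-1}\big(\Exp{(S_\tau-K)^+}-(S_0-K)^+\big)=\tau^{-1}\Exp{c(X_\tau)}\to0$, i.e.\ the difference is $\oh{\tau}$; in the second, the same argument applied to $p$, together with the put--call parity identity from (iv), gives $\Exp{(S_\tau-K)^+}-(S_0-K)^+=\Exp{(K-S_\tau)^+}=\oh{\tau}$.

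There is no deep obstacle here: the substance is entirely inside Theorem \ref{thm:FigLopez} and Lemma \ref{lem:CallsAndPuts}, which checks its hypotheses for $c$ and $p$. The only things that need care are the bookkeeping of the five-way case split — in particular recognising that ``otherwise'' forces the relevant L\'evy integral to vanish, so that Theorem \ref{thm:FigLopez} returns the limit $0$ rather than a positive constant — and the correct use of put--call parity when $S_0>K$, where $(S_0-K)^+\neq0$ and the martingale identity $\Exp{S_\tau}=S_0$ must be invoked.
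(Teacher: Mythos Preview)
Your proposal is correct and follows essentially the same route as the paper: cases (i)--(ii) are dispatched directly, (iii)--(iv) come from Lemma~\ref{lem:CallsAndPuts} plus Theorem~\ref{thm:FigLopez} (with put--call parity for (iv)), and (v) is the degenerate instance of the same limit. Your handling of (v) is in fact slightly more transparent than the paper's, which refers back to the $\Oh{\tau}$ statement of Theorem~\ref{prop:MostGeneralResult} whereas you observe directly that Theorem~\ref{thm:FigLopez} returns the limit $0$, giving $\oh{\tau}$.
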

\begin{proof}\hspace*{\fill}
\begin{enumerate}[(i)]
	\item Trivial.
	\item This is nothing but the Black-Scholes model (with zero interest rates). See, for example, Musiela and Rutkowski (\cite{MusielaRutkowski}).
	\item See the proof of Theorem \ref{prop:MostGeneralResult}.
	\item See the proof of Theorem \ref{prop:MostGeneralResult}.
	\item Suppose that $S_0<K$. If $\int_\R c(x) \nu({\rm d}x)>0$, then (\ref{itm:S0ltK}) would apply. Suppose that $S_0>K$. If $\int_\R p(x) \nu({\rm d}x)>0$, then (\ref{itm:S0gtK}) would apply. Neither of these possibilities is the case. But, by Theorem \ref{prop:MostGeneralResult}, 
$$
  \Exp{(S_{\tau}-K)^+} - (S_0-K)^+ = \Oh{\tau},\quad\tau\to0^+,
$$
	and the claim follows.
\end{enumerate}
\end{proof}
\begin{Remark}
As noted by Levendorski{\v i} \cite{LevendorskiiLevy}, the asymptotics of the call depends only on the positive jumps of $X$ for
the case $S_0<K$ and the negative jumps of $X$ for the case $S_0>K$, and depends neither on the
drift nor the Gaussian part.
\end{Remark}

\subsection{Implied volatility}
\begin{Proposition}
  Suppose that $S_\tau=S_0e^{X_\tau}$, for all $\tau\geq0$, where $X$ is a L{\'e}vy process satisfying \eqref{eq:martconstraints} and $S_0>0$.
  For every $K,\tau>0$, the implied volatility of the European call for expiry $\tau$ and strike $K$, i.e. $\Sigma_0(K,\tau)$, satisfies
  \begin{equation*}
	0 \leq \Sigma_0(K,\tau) < \infty,
  \end{equation*}
 and the lower bound is obtained by trivial $X$ as well as for some non-trivial $X$.
\end{Proposition}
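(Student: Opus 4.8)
The plan is to obtain both assertions as a direct consequence of Lemma~\ref{lem:eL2}, using only the elementary fact that the Black--Scholes call price is strictly increasing in the volatility parameter. Write $C_{\mathrm{BS}}(\theta)$ for the left-hand side of the equation that defines $\Sigma_0(K,\tau)$, viewed as a function of $\theta\in[0,\infty]$ for the fixed data $S_0,K,\tau>0$. Recall the classical facts (see e.g.\ \cite{MusielaRutkowski}): $\theta\mapsto C_{\mathrm{BS}}(\theta)$ is continuous and strictly increasing on $(0,\infty)$, and extends continuously to $[0,\infty]$ with $C_{\mathrm{BS}}(0)=(S_0-K)^+$ and $C_{\mathrm{BS}}(\infty)=S_0$; its range on $[0,\infty]$ is therefore exactly $[(S_0-K)^+,S_0]$. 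Hence, for an admissible call price $c$, the equation $C_{\mathrm{BS}}(\theta)=c$ has a unique solution $\theta\in[0,\infty]$, this solution is finite iff $c<S_0$, and it equals $0$ iff $c=(S_0-K)^+$. In other words, $\Sigma_0(K,\tau)<\infty$ iff $\Exp{(S_\tau-K)^+}<S_0$, and $\Sigma_0(K,\tau)=0$ iff $\Exp{(S_\tau-K)^+}=(S_0-K)^+$.

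Granting this, the two-sided bound is immediate. Non-negativity $\Sigma_0(K,\tau)\geq 0$ is part of the definition. For finiteness, Lemma~\ref{lem:eL2}(Ai) asserts precisely that $\Exp{(S_\tau-K)^+}<S_0$ for every $\tau,K>0$, so $\Sigma_0(K,\tau)<\infty$; and Lemma~\ref{lem:eL2}(Aii) gives $\Exp{(S_\tau-K)^+}\geq(S_0-K)^+$, which confirms that the call price really does lie in the range of $C_{\mathrm{BS}}$ and that $\Sigma_0(K,\tau)$ is well defined as a finite number in $[0,\infty)$.

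It remains to see that the value $0$ is attained. If $X$ is trivial its characteristic triplet is $(0,0,0)$, so $X_\tau=0$ a.s.\ and $S_\tau\equiv S_0$; then $\Exp{(S_\tau-K)^+}=(S_0-K)^+$ and the dictionary above gives $\Sigma_0(K,\tau)=0$ for all $K,\tau>0$. For a non-trivial example one invokes Lemma~\ref{lem:eL2}(Bi) (or (Bii)): it furnishes a non-trivial L\'evy process $X$ and some $K,\tau>0$ with $\Exp{(S_\tau-K)^+}=(S_0-K)^+$, hence $\Sigma_0(K,\tau)=0$ for that strike and maturity. I do not expect any real obstacle here: the whole proposition is a corollary of Lemma~\ref{lem:eL2}, the only point needing care being that one uses the \emph{strict} inequality of part~(Ai) to stay away from the singular value $\theta=\infty$. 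Any genuine work sits inside Lemma~\ref{lem:eL2} itself --- in particular the constructions behind (Bi) and (Bii), where, guided by Theorem~\ref{thm:LevySupport}, one should take $X$ with one-sided jumps (negative when $S_0<K$, positive when $S_0>K$) so that $S_\tau$ cannot reach the strike, at least for small $\tau$, and the call therefore carries no time value.
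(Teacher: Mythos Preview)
Your proof is correct and follows exactly the approach of the paper, which simply says ``Use Lemma~\ref{lem:eL2} and the definition of implied volatility''; you have merely spelled out the standard monotonicity facts about $C_{\mathrm{BS}}$ that make this work. The only extra content is your concluding parenthetical on the constructions behind (Bi)/(Bii), which is also in line with the paper's appendix.
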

\begin{proof}
 Use Lemma \ref{lem:eL2} and the definition of implied volatility.
\end{proof}

\begin{Theorem}[Limiting implied volatility]\hspace*{\fill}

Suppose that $S_t=S_0e^{X_\tau}$, for all $\tau\geq0$, where $X$ is a L{\'e}vy process satisfying \eqref{eq:martconstraints} and $S_0>0$. We recall that $c(x)=(S_0e^x-K)^+$ and $p(x)=(K-S_0e^x)^+$

Then the implied volatility of a $K(>0),K\neq S_0$ strike European call satisfies the following. 
\begin{enumerate}[(i)]
	\item If $X$ has characteristic triplet $(0,0,0)$,  then for every $\tau,K>0$ $\Sigma_0(K,\tau)=0$ and so
	\begin{equation*}
			\lim_{\tau\to0^+}\Sigma_0(K,\tau)=0.
	\end{equation*}
	\item If $X$ has characteristic triplet $(-\sigma^2,\sigma,0)$ with $\sigma\in(0,\infty)$, then, for every $K,\tau>0$, $\Sigma_0(K,\tau)=\sigma$ and
	\begin{equation*}
		\lim_{\tau\to0^+}\Sigma_0(K,\tau)=\sigma 
	\end{equation*}
	\item If $S_0<K$, and $\int_\R c(x) \nu({\rm d}x)>0$, then
		\begin{equation*}
			\lim_{\tau\to0^+} \Sigma_0(K,\tau)=\infty
		\end{equation*}
	\item If $S_0>K$, and $\int_\R p(x) \nu({\rm d}x)>0$, then
		\begin{equation*}
			\lim_{\tau\to0^+} \Sigma_0(K,\tau)=\infty.
		\end{equation*}
	\item There exist non-trivial processes such that for some $K\neq S_0$
$$
		\lim_{\tau\to0^+}\Sigma_0(K,\tau)=0.
$$
\end{enumerate}
\end{Theorem}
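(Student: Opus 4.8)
The plan is to dispose of the five cases in turn, using the call-price asymptotics of Theorem~\ref{prop:LessGeneralCallResult} together with the implied-volatility formula of Theorem~\ref{thm:RoperRutkowski}. Throughout I would use the elementary fact that, since $S_0\neq K$, the Black--Scholes call price is continuous and strictly increasing in the volatility parameter $\theta\in[0,\infty]$, equals $(S_0-K)^+$ at $\theta=0$, and increases to $S_0$ as $\theta\to\infty$; hence $\Sigma_0(K,\tau)$ is the unique $\theta$ reproducing the call price, and in particular $\Exp{(S_\tau-K)^+}=(S_0-K)^+$ iff $\Sigma_0(K,\tau)=0$. Cases (i) and (ii) are then immediate: in (i), Theorem~\ref{prop:LessGeneralCallResult}(i) gives $\Exp{(S_\tau-K)^+}=(S_0-K)^+$ for every $\tau$, so $\Sigma_0(K,\tau)\equiv0$; in (ii), by Theorem~\ref{prop:LessGeneralCallResult}(ii) the call price is exactly the Black--Scholes price with volatility $\sigma$, so $\Sigma_0(K,\tau)\equiv\sigma$. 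The asserted limits follow.

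For (iii) --- case (iv) being identical with $p$, $S_0>K$ and $\int_\R p(x)\,\nu(\mathrm dx)$ replacing $c$, $S_0<K$ and $\int_\R c(x)\,\nu(\mathrm dx)$ --- put $D(\tau):=\Exp{(S_\tau-K)^+}-(S_0-K)^+$ and $c_\ast:=\int_\R c(x)\,\nu(\mathrm dx)$, which is finite by Lemma~\ref{lem:CallsAndPuts} and strictly positive by hypothesis. Theorem~\ref{prop:LessGeneralCallResult}(iii) gives $D(\tau)\sim\tau c_\ast$ as $\tau\to0^+$, so $D(\tau)>0$ for all small $\tau$ and the hypothesis of Theorem~\ref{thm:RoperRutkowski} holds (take $t=0$, $\widehat S=S$, $\widehat S_0=S_0\neq K$); therefore
$$
\lim_{\tau\to0^+}\Sigma_0(K,\tau)=\lim_{\tau\to0^+}\frac{\abs{\ln(K/S_0)}}{\sqrt{-2\tau\ln D(\tau)}}
$$
in the precise sense of that theorem. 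Since $D(\tau)\sim\tau c_\ast$ we have $\ln D(\tau)=\ln\tau+\Oh{1}$, whence $-2\tau\ln D(\tau)\to0^+$ as $\tau\to0^+$, while the numerator is the fixed positive constant $\abs{\ln(K/S_0)}$; hence the right-hand side, and so $\Sigma_0(K,\tau)$, tends to $+\infty$.

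For (v) I would exhibit an explicit non-trivial example. Take $\sigma=0$ and $\nu$ a finite measure supported on $[-1,0)$ with $0$ in its support, e.g.\ $\nu(\mathrm dx)=\mathbbm 1_{[-1,0)}(x)\,\mathrm dx$, and let $\gamma$ be determined by \eqref{eq:martconstraints}; the resulting $X$ is non-trivial and of finite activity, so Theorem~\ref{thm:LevySupport}(2) gives $\Prob{X_\tau\le \gamma_0\tau}=1$ for every $\tau>0$, where $\gamma_0$ is the drift of $X$. Fix any $K>S_0$, so $\ln(K/S_0)>0$; since $\gamma_0\tau\to0$ there is $\delta>0$ with $\gamma_0\tau<\ln(K/S_0)$ for all $\tau\in(0,\delta)$, whence $\Prob{X_\tau>\ln(K/S_0)}=0$ and so, by Lemma~\ref{lem:eL2}(A\ref{itm:CattainsLBzero}), $\Exp{(S_\tau-K)^+}=(S_0-K)^+=0$ for $\tau\in(0,\delta)$. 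By the bijectivity noted above this forces $\Sigma_0(K,\tau)=0$ on $(0,\delta)$, so $\lim_{\tau\to0^+}\Sigma_0(K,\tau)=0$. (One could equally take $\nu$ supported on $(0,1]$ with $0$ in its support, any $K<S_0$, and use Theorem~\ref{thm:LevySupport}(1) with Lemma~\ref{lem:eL2}(A\ref{itm:CattainsLBpos}).)

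The only point requiring genuine care is (iii)/(iv): one must verify that Theorem~\ref{thm:RoperRutkowski} applies, i.e.\ that $D(\tau)>0$ for all sufficiently small $\tau$ --- exactly what the asymptotic equivalence in Theorem~\ref{prop:LessGeneralCallResult}(iii) provides since $c_\ast>0$ --- and then that the $\Oh{1}$ error in $\ln D(\tau)$ is killed by the factor $\tau$; both are routine. Everything else is bookkeeping with the quoted results and the monotonicity of the Black--Scholes price in its volatility argument.
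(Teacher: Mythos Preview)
Your proof is correct and follows essentially the same route as the paper: cases (i)--(ii) are disposed of as trivial, (iii)--(iv) combine Theorem~\ref{prop:LessGeneralCallResult} with Theorem~\ref{thm:RoperRutkowski} via $\tau\ln\tau\to0$, and (v) is handled by exhibiting a one-sided L{\'e}vy measure so that Theorem~\ref{thm:LevySupport} forces the call to equal its intrinsic value near expiry. The only cosmetic difference is in the example for (v): the paper takes an infinite-activity process with \emph{positive} jumps and an in-the-money strike $K<S_0$, whereas you take a finite-activity process with \emph{negative} jumps and an out-of-the-money strike $K>S_0$ --- the two constructions are dual and equally valid, and indeed you note the paper's variant parenthetically.
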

\begin{proof}
The first two statements are trivial. The third and fourth follow from Theorem \ref{prop:LessGeneralCallResult} and Theorem \ref{thm:RoperRutkowski}
using the elementary fact that $\tau\ln(\tau)\to0$ as $\tau\to0^+$.

For the fifth we argue as follows.
Let $X_\tau$ be of infinite activity and finite variation with $0$ in the support of $\nu$. Suppose that the support of $\nu$ is a subset of $[0,\infty)$. With only
positive jumps $X$ must have negative drift in order that $S$ be a martingale. By Theorem \ref{thm:LevySupport}, we have that $\Prob{X_\tau\in[\gamma\tau,\infty)}=1$
so that $\Prob{X_\tau<\gamma\tau(<0)}=0$. Momentarily fix $\tau=1$. Choose $K$ such that $\ln(K/S_0)<\gamma\tau=\gamma(<0)$. Clearly,
$\Prob{X_1<\ln(K/S_0)(<\gamma\cdot1)}=0$. But then $\Prob{X_\tau<\ln(K/S_0)(<\gamma\tau)}=0$ for all $\tau\in(0,1)$. Using the same argument as in the proof of 
Lemma \ref{lem:eL2} we conclude that $\Exp{(S_\tau-K)^+}=(S_0-K)^+=S_0-K$ for all $\tau\in(0,1]$. It follows that $\Sigma_0(K,\tau)=0$ for all $\tau\in(0,1]$ so that, trivially, $\lim_{\tau\to0^+}\Sigma_0(K,\tau)=0$.
\end{proof}

\subsection{Examples}
If $X$ is a Variance Gamma, NIG, CGMY (KoBoL), Meixner, Generalized Hyperbolic, or Normal Tempered Stable process 
then the L{\'e}vy measure has a density that is positive at each point of $\R$ except zero, so 
\begin{equation*}
	\Exp{(S_\tau-K)^+} - (S_0-K)^+ \sim 	\begin{cases}
							\tau\int_\R c(x)\,\nu(\textrm{d}x),&S_0<K,\\
							\tau\int_\R p(x)\,\nu(\textrm{d}x),&S_0>K\\
			                        \end{cases} 
\end{equation*}
and so
\begin{equation*}
 \lim_{\tau\to0^+}\Sigma_0(K,\tau)=\infty.
\end{equation*}
Similarly, in Merton's model and Kou's model the L{\'e}vy measure has a density that is positive at each point of $\R$ except zero, so we get the implied volatility explosion. See Cont and Tankov (\cite{ContTankov}).

\section{Appendix}\label{sec:Appendix}
{\bf Lemma \ref{lem:eL2}\ }
For every $\tau>0$ and $K>0$, 
\begin{enumerate}[({A}i)]
	\item $\Exp{(S_\tau-K)^+}<S_0$.
	\item $\Exp{(S_\tau-K)^+}\geq (S_0-K)^+$.
	\item\label{itm:CattainsLBzeroAppdx} Suppose that $S_0\leq K$, then $\Exp{(S_\tau-K)^+}=(S_0-K)^+$ iff $\Prob{S_\tau>K}=\Prob{X_\tau>\ln(K/S_0)}=0$.
	\item\label{itm:CattainsLBposAppdx}  Suppose that $S_0> K$, then $\Exp{(S_\tau-K)^+}=(S_0-K)^+$ iff $\Prob{S_\tau<K}=\Prob{X_\tau<\ln(K/S_0)}=0$.
\end{enumerate}
In addition:
\begin{enumerate}[({B}i)]
	\item There exists a non-trivial L{\'e}vy process such that (A\ref{itm:CattainsLBzeroAppdx}) holds for some $\tau,K>0$.
	\item There exists a non-trivial L{\'e}vy process such that (A\ref{itm:CattainsLBposAppdx}) holds for some $\tau,K>0$.
\end{enumerate}
\begin{proof}
Fix $\tau,K>0$. 
\begin{enumerate}[({A}i)]
	\item	Clearly, $\Exp{(S_\tau-K)^+}\leq S_0$. Suppose that $\Exp{(S_{\tau}-K)^+}=S_0$. This
		requires that $\Exp{(S_{\tau}-K)^+}=\Exp{S_\tau}$. But, since $K>0$, $(S_{\tau}-K)^+<S_\tau$ $\P$-a.s., unless $S_\tau=0$, $\P$-a.s. However, $\Prob{X_\tau=-\infty}=0$.
	\item	Well known. Use Jensen's Inequality and then the martingale assumption on $S$.
	\item	Suppose that $S_0\leq K$. Then $\Exp{(S_{\tau}-K)^+}=(S_0-K)^+$, iff $\Prob{S_\tau>K}=0$ iff $\Prob{X_\tau>\ln(K/S_0)}=0$.
	\item	Suppose that $S_0>K$. Then 
		\begin{align*}
			&\textnormal{iff} & &\int_0^\infty(y-(y\wedge K))\,\Prob{S_\tau\in {\rm d}y}=\int_0^\infty y\,\Prob{S_\tau\in {\rm d}y}-K\\
			&\textnormal{iff} & &\int_0^\infty (y\wedge K)\,\Prob{S_\tau\in {\rm d}y}=\int_0^\infty K \,\Prob{S_\tau\in {\rm d}y}\\
			&\textnormal{iff} & &\int_{[0,K)} y\,\Prob{S_\tau\in {\rm d}y}=\int_{[0,K)}K\,\Prob{S_\tau\in {\rm d}y}\\
			&\textnormal{iff} & &\Prob{S_\tau<K}=0\\
			&\textnormal{iff} & &\Prob{X_\tau<\ln(K/S_0)}=0.
		\end{align*}
\end{enumerate}

\begin{enumerate}[({B}i)]
	\item 	Let $X$ be a finite variation, infinite activity L{\' e}vy process. Suppose first that the support of $\nu$ is a subset of 
		$[0,\infty)$. By Theorem \ref{thm:LevySupport}, we have $\Prob{X_\tau\in[\gamma \tau,\infty)}=1$. Because of the martingale property 
		of $S$ we must have $\gamma<0$.  We therefore take $K$ such that $\ln(K/S_0)<\gamma\tau(<0)$. Then $\Prob{X_\tau<\ln(K/S_0)}=0$.

	\item When $\nu$ is a subset of $(-\infty,0]$ we have, by Theorem \ref{thm:LevySupport}, that $\Prob{X_\tau\in(-\infty,\gamma\tau]}=1$. Because 
		of the martingale property of $S$ we must have $\gamma>0$. We therefore take $K$ such that $\ln(K/S_0)>\gamma\tau(>0)$. Then
		$\Prob{X_\tau>\ln(K/S_0)}=0$.
\end{enumerate}
\end{proof}

{\bf Lemma \ref{lem:CallsAndPuts}\ }
\emph{
For each fixed $K>0$, let $c:\R\to\R$ (respectively $p:\R\to\R$) be defined by $c(x)=(S_0e^x-K)^+$ (respectively $p(x)=(K-S_0e^x)^+$).
Then Theorem \ref{thm:FigLopez} applies to $c$ (respectively $p$) provided that $K>S_0$ (respectively $K< S_0$).}
\begin{proof}
Clearly $p$ and $c$ are locally bounded and $\nu$-a.e. continuous. 

For $c$, consider $x\mapsto e^x$.
By the constraint \eqref{eq:martconstraints} on $X$, the function $x\mapsto 1\cdot e^x$ is in $\mathcal{S}(\nu)$ and it satisfies 
$$
\limsup_{\abs{x}\to\infty}\dfrac{(S_0e^x-K)^+}{e^x}<\infty.
$$
For $K>S_0$, $c$ vanishes in a neighbourhood of the origin so that certainly $c(x)=\oh{x^2}$ as $x\to0$.

For $p$, consider $x\mapsto K$. Clearly $x\mapsto K\cdot 1$ is in $\mathcal{S}(\nu)$ and it satisfies
$$
	\limsup_{\abs{x}\to\infty}\dfrac{(K-S_0e^x)^+}{K}<\infty.
$$
For $K<S_0$, $p$ vanishes in a neighbourhood of the origin so that certainly $p(x)=\oh{x^2}$ as $x\to0$.
\end{proof}

\section{Acknowledgements}
Thanks are due to Ben Goldys and Marek Rutkowski for support, encouragement, and assistance. Thanks to Jan Baldeaux for
proof-reading. Remaining errors are my own.

\bibliographystyle{amsplain}
\providecommand{\bysame}{\leavevmode\hbox to3em{\hrulefill}\thinspace}
\providecommand{\MR}{\relax\ifhmode\unskip\space\fi MR }
\providecommand{\MRhref}[2]{%
  \href{http://www.ams.org/mathscinet-getitem?mr=#1}{#2}
}
\providecommand{\href}[2]{#2}

\end{document}